\newtheorem{theorem}{\indent Theorem}[section]
\newtheorem{proposition}[theorem]{\indent Proposition}
\newtheorem{EXAMPLE}{\indent Example}[section]
\newtheorem{definition}{\indent Definition}[section]
\newcommand{\code}{{\mathcal{C}}}
\newcommand{\cL}{{\mathcal{L}}}
\newcommand{\cV}{{\mathcal{V}}}
\newcommand{\cG}{{\mathcal{G}}}
\newcommand{\cH}{{\mathcal{H}}}
\newcommand{\cA}{{\mathcal{A}}}
\newcommand{\cB}{{\mathcal{B}}}
\newcommand{\cD}{{\mathcal{D}}}
\newcommand{\cI}{{\mathcal{I}}}
\newcommand{\cJ}{{\mathcal{J}}}
\newcommand{\cM}{{\mathcal{M}}}
\newcommand{\cK}{{\mathcal{K}}}
\newcommand{\cP}{{\mathcal{P}}}
\newcommand{\cQ}{{\mathcal{Q}}}
\newcommand{\cR}{{\mathcal{R}}}
\newcommand{\cS}{{\mathcal{S}}}
\newcommand{\cU}{{\mathcal{U}}}
\newcommand{\cY}{{\mathcal{Y}}}
\newcommand{\cYY}{{\mathcal{Y}}}
\newcommand{\ip}{{\mathrm{ip}}}
\newcommand{\op}{{\mathrm{op}}}
\newcommand{\bldb}{{\mbox{\boldmath $b$}}}
\newcommand{\bldbb}{{\mbox{\scriptsize \boldmath $b$}}}
\newcommand{\bldc}{{\mbox{\boldmath $c$}}}
\newcommand{\bldd}{{\mbox{\boldmath $d$}}}
\newcommand{\blddd}{{\mbox{\scriptsize \boldmath $d$}}}
\newcommand{\bldg}{{\mbox{\boldmath $g$}}}
\newcommand{\bldgg}{{\mbox{\scriptsize \boldmath $g$}}}
\newcommand{\bldh}{{\mbox{\boldmath $h$}}}
\newcommand{\bldp}{{\mbox{\boldmath $p$}}}
\newcommand{\bldP}{{\mbox{\boldmath $P$}}}
\newcommand{\bldv}{{\mbox{\boldmath $v$}}}
\newcommand{\bldV}{{\mbox{\boldmath $V$}}}
\newcommand{\bldW}{{\mbox{\boldmath $W$}}}
\newcommand{\bldx}{{\mbox{\boldmath $x$}}}
\newcommand{\bldxx}{{\mbox{\scriptsize \boldmath $x$}}}
\newcommand{\bldxi}{{\mbox{\boldmath $\xi$}}}
\newcommand{\bldXi}{{\mbox{\boldmath $\Xi$}}}
\newcommand{\zeros}{{\mbox{\boldmath $0$}}}%
    \def\squarebox#1{\hbox to #1{\hfill\vbox to #1{\vfill}}}
\newlength{\Algwidth}
\begin{document}

% paper title
\title{Linear-Programming Receivers}

% author names and affiliations
% use a multiple column layout for up to three different
% affiliations
\author{
\authorblockN{Mark F. Flanagan}
\authorblockA{DEIS, University of Bologna \\
via Venezia 52, 47023 Cesena (FC), Italy  \\
Email: mark.flanagan@ieee.org}
}

% make the title area
\maketitle

\begin{abstract}

It is shown that any communication system which admits a sum-product (SP) receiver also admits a corresponding linear-programming (LP) receiver. The two receivers have a relationship defined by the local structure of the underlying graphical model, and are inhibited by the same phenomenon, which we call \emph{pseudoconfigurations}. This concept is a generalization of the concept of \emph{pseudocodewords} for linear codes. It is proved that the LP receiver has the `optimum certificate' property, and that the receiver output is the lowest cost pseudoconfiguration. Equivalence of graph-cover pseudoconfigurations and linear-programming pseudoconfigurations is also proved. 
%There exist pseudoconfigurations which correspond to configurations, and in general, some which do not. 
While the LP receiver is generally more complex than the corresponding SP receiver, the LP receiver and its associated pseudoconfiguration structure provide an analytic tool for the analysis of SP receivers.
As an example application, we show how the LP design technique may be applied to the problem of joint equalization and decoding. 
%and compare performance of the SP receiver with that of the LP receiver. 
%\textbf{Keywords:}
%Linear-programming, sum-product algorithm, factor graphs, pseudoconfigurations, equalization, decoding. 
\end{abstract}

\section{Introduction}

The decoding algorithms for the best known classes of error-correcting code to date, namely concatenated (``turbo") codes \cite{Berrou} and low-density parity check (LDPC) codes \cite{Gallager}, have been shown to be instances of a much more general algorithm called the \emph{sum-product} (SP) algorithm \cite{Wiberg,Aji, Kschischang}. This algorithm solves the general problem of marginalizing a product of functions which take values in a semiring $\cR$. In the communications context, $\cR$ is equal to $\mathbb{R}$ and the maximization of each marginal function minimizes the error rate on a symbol-by-symbol basis. It was also shown that many diverse situations may allow the use of SP based reception \cite{Unified_design}, including joint iterative equalization and decoding (or \emph{turbo equalization}) \cite{turbo_eq} and joint source-channel decoding \cite{Goertz}.

Recently, a linear-programming (LP) based approach to decoding linear (and especially LDPC) codes was developed for binary \cite{Feldman-thesis, Feldman} and nonbinary coding frameworks \cite{FSBG_SCC, FSBG_journal}. The concept of \emph{pseudocodeword} proved important in the performance analysis of both LP and SP based decoders \cite{FKKR, KV-characterization, KV-IEEE-IT}. 
%Also, LP decoders for some classes of turbo codes were described in \cite{Feldman_turbo_IRA}.
Also, linear-programming decoders for irregular repeat-accumulate (IRA) codes and turbo codes were described in \cite{Feldman_turbo_IRA}. Regarding applications beyond coding, an LP-based method for low-complexity joint equalization and decoding of LDPC coded transmissions over the magnetic recording channel was proposed in \cite{Siegel}.

In this paper it is shown that the problem of maximizing a product of $\cR$-valued functions is amenable to an approximate (suboptimal) solution using an LP relaxation, under two conditions: first, that the semiring $\cR$ is equal to a subset of $\mathbb{R}$, and second, that all non-pendant factor nodes are indicator functions for a local behaviour. Fortunately, these conditions are satisfied by almost all practical communication receiver design problems. Interestingly, the LP exhibits a ``separation effect'' in the sense that pendant factor nodes in the factor graph contribute the cost function, and non-pendant nodes determine the LP constraint set. This distinction is somewhat analagous to the case of SP-based reception where pendant factor nodes contribute initial messages exactly once, and all other nodes update their messages periodically. Our LP receiver generalizes the LP \emph{decoders} of \cite{Feldman, FSBG_journal,Feldman_turbo_IRA}. It is proved that both the SP and LP based receivers are inhibited by the same phenomenon which we characterize as a set of \emph{pseudoconfigurations}; this is not intuitively obvious since the SP receiver derives from an attempt to minimize error rate on a symbol-by-symbol basis, while the LP receiver derives from an attempt to minimize the configuration error rate. 

\section{Maximization of a Product of Functions by Linear Programming}

We begin by introducing some definitions and notation. Suppose that we have variables $x_i$, $i \in \cI$, where $\cI$ is a finite set, and the variable $x_i$ lies in the finite set $\cA_i$ for each $i \in \cI$. Let $\bldx = ( x_i )_{i\in\cI}$; then $\bldx$ is called a \emph{configuration}, and the Cartesian product $\cA = \prod_{i \in\cI} \cA_i$ is called the \emph{configuration space}. Suppose now that we wish to find that configuration $\bldx \in \cA$ which maximizes the product of real-valued functions 
\begin{equation}
u\left(\bldx\right)=\prod_{j\in \cJ}f_{j}\left(\bldx_{j}\right)\label{eq:factorization_of_global_function}
\end{equation}
where $\cJ$ is a finite set, $\bldx_j = ( x_i )_{i\in\cI_j}$ and $\cI_j \subseteq \cI$ for each $j \in \cJ$. We define the \emph{optimum} configuration $\bldx_\mathrm{opt}$ to be that configuration $\bldx \in \cA$ which maximizes (\ref{eq:factorization_of_global_function}). The function $u(\bldx)$ is called the \emph{global function} \cite{Kschischang}.

The factor graph for the global function $u(\bldx)$ and its factorization (\ref{eq:factorization_of_global_function}) is a (bipartite) graph defined as follows. There is a variable node for each variable $x_{i}$ ($i \in \cI$) and a factor node for each factor $f_{j}$ ($j \in \cJ$). An edge connects variable node $x_{i}$ to factor node $f_{j}$ if and only if $x_{i}$ is an argument of $f_{j}$. Note that for any $j\in\cJ$, $\cI_j$ is the set of $i\in\cI$ for which $x_{i}$ is an argument of $f_{j}$. Also, for any $i\in\cI$, the set of $j\in\cJ$ for which $x_{i}$ is an argument of $f_{j}$ is denoted $\cJ_i$. The \emph{degree} of a node $v$, denoted $d(v)$, is the number of nodes to which it is joined by an edge. Any node $v$ for which $d(v)=1$ is said to be \emph{pendant}. 

Define
\[
\cY = \{ i \in \cI  \; : \; \exists j \in \cJ_i \: \: \mbox{with} \: \: d(f_j) = 1 \}
\]
i.e., $\cY \subseteq \cI$ is the set of $i \in \cI$ for which variable node $x_i$ is connected to a pendant factor node. Then, for $i\in\cY$, define
\[
h_{i}\left(x_{i}\right) = \prod_{j\in J_i \; : \; d(f_j) = 1} f_j\left(x_{i}\right) \, .
\]
We assume that the function $h_{i}\left(x_{i}\right)$ is positive-valued for each $i\in\cY$. Also, denoting the Cartesian product $\cA_{\cYY} = \prod_{i \in\cY} \cA_i$, we define the projection  
\[
\bldP_{\cYY} \; : \; \cA \longrightarrow \cA_{\cYY} \quad \mbox{such that} \quad \bldP_{\cYY}\left(\bldx\right) = ( x_i )_{i\in\cY} \, .
\]
Also, we adopt the notation $\bldx_{\cYY} = ( x_i )_{i\in\cY}$ for elements of $\cA_{\cYY}$. 

Next define $\cL = \{j\in\cJ \; : \; d(f_{j}) \ge  2 \}$.  
So, without loss of generality we may write 
\begin{equation}
u\left(\bldx\right)=\prod_{i\in \cY}h_{i}\left(x_{i}\right) \cdot \prod_{j\in \cL}f_{j}\left(\bldx_{j}\right)\label{eq:factorization_of_global_function2}
\end{equation}
Now, assume that all factor nodes $f_j$, $j \in\cL$, are indicator functions for some local behaviour $\cB_j$, i.e.,
\[
f_{j}\left(\bldx_{j}\right) = [\bldx_{j} \in \cB_j] \quad \forall j\in \cL 
\]
where the indicator function for the logical predicate $P$ is defined by 
\[
[P] = \left\{ \begin{array}{cc}
1 & \textrm{ if } P \textrm{ is true } \\
0 & \textrm{ otherwise. }\end{array}\right.
\]
Note that we write any $\bldv\in\cB_j$ as $\bldv = ( v_i )_{i\in\cI_j}$, i.e., $\bldv$ is indexed by $\cI_j$. Also we define the \emph{global behaviour} $\cB$ as follows: for any $\bldx \in\cA$, we have $\bldx \in\cB$ if and only if $\bldx_j \in\cB_j$ for every $j\in\cL$. The configuration $\bldx \in \cA$ is said to be \emph{valid} if and only if $\bldx \in \cB$.

We assume that the mapping $\bldP_{\cYY}$ is injective on $\cB$, i.e., if $\bldx_1, \bldx_2 \in \cB$ and $\bldP_{\cYY}(\bldx_1) = \bldP_{\cYY}(\bldx_2)$, then $\bldx_1 = \bldx_2$. This corresponds to a `well-posed' problem. Note that in the communications context, since all non-pendant factor nodes are indicator functions, observations may only be contributed through the set of pendant factor nodes. Therefore, failure of the injectivity property in the communications context would mean that one particular set of channel inputs could correspond to two different transmit information sets, which would reflect badly on system design.

So we have 
\begin{eqnarray*}
\bldx_\mathrm{opt} & = & \arg \max_{\bldxx \in \cA} \left( \prod_{i\in \cY}h_{i}\left(x_{i}\right) \cdot \prod_{j\in \cL}f_{j}\left(\bldx_{j}\right) \right) \\
 & = & \arg \max_{\bldxx \; : \; \bldxx_j \in \cB_j \forall j\in \cL} \prod_{i\in \cY}h_{i}\left(x_{i}\right) \\
 & = & \arg \max_{\bldxx \in \cB} \sum_{i\in \cY} \log h_{i}\left(x_{i}\right) \; .
\end{eqnarray*}
Denote $N_i = |\cA_i|$ for $i \in \cI$, and $N_{\cYY} = \sum_{i \in \cYY} N_i$. Now, for each $i \in \cI$, define the mapping 
\[
\bldxi_i \; : \; \cA_i \longrightarrow \{ 0, 1 \}^{N_i} \subset \mathbb{R}^{N_i}
\]
by
\[
\bldxi_i (\alpha) = ( [\gamma = \alpha] )_{\gamma \in A_i } \; .
\]
Building on these mappings, we also define
\[
\bldXi \; : \; \cA_{\cYY} \longrightarrow \{ 0, 1 \}^{N_{\cYY}} \subset \mathbb{R}^{N_{\cYY}} 
\]
according to
\[
\bldXi(\bldx_{\cYY}) = ( \bldxi_i(x_i) )_{i \in\cY} \; .
\]
We note that $\bldXi$ is injective.

Now, for vectors $\bldg \in \mathbb{R}^{N_{\cYY}}$, we adopt the notation
\[
\bldg = ( \bldg_i )_{i \in \cYY} \quad \mbox{where} \quad \bldg_i = ( g_i^{(\alpha)} )_{\alpha \in \cA_i} \quad \forall i \in \cY \; .
\]
In particular, we define the vector $\boldsymbol{\lambda} \in \mathbb{R}^{N_{\cYY}}$ by setting
\[
\lambda_i^{(\alpha)} = \log h_i(\alpha)
\]
for each $i \in \cY$, $\alpha \in \cA_i$. This allows us to develop the formulation of the optimum configuration as
\begin{eqnarray*}
\bldx_\mathrm{opt} & = & \arg \max_{\bldxx \in \cB} \sum_{i\in \cY} \log h_{i}\left(x_{i}\right) \\
 & = & \arg \max_{\bldxx \in \cB} \sum_{i\in \cY} \boldsymbol{\lambda}_i \bldxi_i (x_i) ^T \\
 & = & \arg \max_{\bldxx \in \cB} \boldsymbol{\lambda} \bldXi (\cP_{\cYY} (\bldx)) ^T \; .
\end{eqnarray*}
Note that the optimization has reduced to the maximization of an inner product of vectors, where the first vector derives only from observations (or ``channel information") and the second vector derives only from the global behaviour (the set of valid configurations). This problem may then be recast as a linear program
\begin{equation}
\bldx_\mathrm{opt} = \bldP_{\cYY}^{-1} \left( \bldXi^{-1} (\bldg_\mathrm{opt}) \right)
\label{eq:xopt_LP1}
\end{equation}
where
\begin{equation}
\bldg_\mathrm{opt} = \arg \max_{\bldgg \in \cK_{\cYY}(\cB)} \boldsymbol{\lambda} \bldg ^T 
\label{eq:gopt_LP1}
\end{equation}
and the maximization is over the convex hull of all points corresponding to valid configurations:
\begin{equation}
\cK_{\cYY}(\cB) = H_\mathrm{conv} \big\{ \bldXi \left( \bldP_{\cYY}\left(\bldx\right) \right) \; : 
\; \bldx\in\cB \big\} \; .
\label{eq:convex_hull_LP1}
\end{equation}

\section{Equivalent Linear Programming Solution for the Optimum Configuration}
\label{sec:equiv_opt_LP}

In this section we define a lower-complexity LP to solve for the optimum configuration, and prove that its performance is equivalent to the original.
Here, for each $i\in\cI$, let $\alpha_i$ be an arbitrary element of $\cA_i$, and let $\cA_i^{-} = \cA_i \backslash \{ \alpha_i \} $ (note that for each $i\in\cI$, $|\cA_i| \ge 2$, otherwise $x_i$ is not a `variable'). Denote $N_i^{-} = |\cA_i^{-}| \ge 1$ for every $i \in \cI$, and denote $N_{\cYY}^{-} = \sum_{i \in \cY} N_i^{-}$. Then, for each $i \in \cI$, define the mapping 
\[
\tilde{\bldxi}_i \; : \; \cA_i \longrightarrow \{ 0, 1 \}^{N_i^{-}} \subset \mathbb{R}^{N_i^{-}}
\]
by 
\[
\tilde{\bldxi}_i (\alpha) = ( [\gamma = \alpha] )_{\gamma \in A_i^{-} } \; .
\]
Building on this, we also define
\[
\tilde{\bldXi} \; : \; \cA_{\cYY} \longrightarrow \{ 0, 1 \}^{N_{\cYY}^{-}} \subset \mathbb{R}^{N_{\cYY}^{-}}
\]
according to
\[
\tilde{\bldXi}(\bldx_{\cYY}) = ( \tilde{\bldxi}_i(x_i) )_{i \in\cY} \; .
\]
We note that $\tilde{\bldXi}$ is injective.

Now, for vectors $\tilde{\bldg} \in \mathbb{R}^{N_{\cYY}^{-}}$, we adopt the notation
\[
\tilde{\bldg} = ( \tilde{\bldg}_i )_{i \in \cYY} \quad \mbox{where} \quad \tilde{\bldg}_i = ( \tilde{g}_i^{(\alpha)} )_{\alpha \in \cA_i^{-}} \quad \forall i \in \cY \; .
\] 
In particular, we define the vector $\tilde{\boldsymbol{\lambda}} \in \mathbb{R}^{N_{\cYY}^{-}}$ by setting
\[
\tilde{\lambda}_i^{(\alpha)} = \log \left[ \frac{ h_i(\alpha) }{ h_i(\alpha_i) } \right]
\]
for each $i \in \cY$, $\alpha \in \cA_i^{-}$. Our new LP is then given by
\begin{equation}
\bldx_\mathrm{opt} = \bldP_{\cYY}^{-1} \left( \tilde{\bldXi}^{-1} (\tilde{\bldg}_\mathrm{opt}) \right)
\label{eq:xopt_LP2}
\end{equation}
where
\begin{equation}
\tilde{\bldg}_\mathrm{opt} = \arg \max_{\tilde{\bldgg} \in \tilde{\cK}_{\cYY}(\cB)} \tilde{\boldsymbol{\lambda}} \tilde{\bldg} ^T 
\label{eq:gopt_LP2}
\end{equation}
and the maximization is over the convex hull of all points corresponding to valid configurations:
\begin{equation} 
\tilde{\cK}_{\cYY}(\cB) = H_\mathrm{conv} \big\{ \tilde{\bldXi} \left( \bldP_{\cYY}\left(\bldx\right) \right) \; : 
\; \bldx\in\cB \big\} \; .
\label{eq:convex_hull_LP2}
\end{equation}

The following proposition proves the equivalence of the original and lower-complexity linear programs.
\begin{proposition}\label{prop:LP_equivalence}
The linear program defined by (\ref{eq:xopt_LP2})--(\ref{eq:convex_hull_LP2}) produces the same (optimum) configuration output as the linear program defined by (\ref{eq:xopt_LP1})--(\ref{eq:convex_hull_LP1}).  
\end{proposition}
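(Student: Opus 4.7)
The plan is to build an affine bijection $\phi$ between the two feasible polytopes $\cK_\cY(\cB)$ and $\tilde{\cK}_\cY(\cB)$ and show that, under $\phi$, the two objectives differ by an additive constant that depends only on the fixed choices $\{\alpha_i\}_{i\in\cY}$. Once this is in hand, the maximizers correspond under $\phi$, and the vertex of $\cK_\cY(\cB)$ selected by the first LP maps to the vertex of $\tilde{\cK}_\cY(\cB)$ selected by the second, so the subsequent decoding through $\bldXi^{-1}$ (respectively $\tilde{\bldXi}^{-1}$) and $\bldP_\cY^{-1}$ returns the same $\bldx_\mathrm{opt}$.

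First, I would establish a ``block-sum'' invariant: for every $\bldg \in \cK_\cY(\cB)$ and every $i\in\cY$, $\sum_{\alpha\in\cA_i} g_i^{(\alpha)} = 1$. This holds on the generating vertices $\bldXi(\bldP_\cY(\bldx))$ since each block $\bldxi_i(x_i)$ is a standard basis vector, and it is preserved by taking convex combinations. Define $\phi:\cK_\cY(\cB)\to\tilde{\cK}_\cY(\cB)$ to be the linear map that deletes from each block $\bldg_i$ the coordinate indexed by $\alpha_i$. The block-sum invariant lets us reconstruct that deleted coordinate as $1 - \sum_{\alpha\in\cA_i^-} \tilde{g}_i^{(\alpha)}$, so $\phi$ is injective; and since $\phi(\bldXi(\bldP_\cY(\bldx))) = \tilde{\bldXi}(\bldP_\cY(\bldx))$ for every $\bldx\in\cB$, $\phi$ carries the generating set of the first polytope bijectively onto that of the second, hence is an affine bijection between the polytopes themselves.

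Second, I would verify the cost identity. Splitting off the $\alpha=\alpha_i$ term in each inner sum and substituting $g_i^{(\alpha_i)} = 1 - \sum_{\alpha\in\cA_i^-}\tilde{g}_i^{(\alpha)}$ gives
\[
\boldsymbol{\lambda}\bldg^T = \sum_{i\in\cY}\log h_i(\alpha_i) \;+\; \tilde{\boldsymbol{\lambda}}\,\phi(\bldg)^T ,
\]
where the first summand is a constant independent of $\bldg$. Therefore $\arg\max_{\bldg\in\cK_\cY(\cB)} \boldsymbol{\lambda}\bldg^T$ and $\arg\max_{\tilde\bldg\in\tilde{\cK}_\cY(\cB)} \tilde{\boldsymbol{\lambda}}\tilde\bldg^T$ are related by $\tilde{\bldg}_\mathrm{opt} = \phi(\bldg_\mathrm{opt})$. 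Combining with the observation that $\phi$ sends the vertex $\bldXi(\bldP_\cY(\bldx))$ to $\tilde{\bldXi}(\bldP_\cY(\bldx))$, one obtains $\tilde{\bldXi}^{-1}(\tilde{\bldg}_\mathrm{opt}) = \bldXi^{-1}(\bldg_\mathrm{opt}) = \bldP_\cY(\bldx_\mathrm{opt})$, so both formulations (\ref{eq:xopt_LP1}) and (\ref{eq:xopt_LP2}) produce the same configuration.

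There is no real obstacle beyond the bookkeeping in the cost identity; the only subtlety is that $\bldXi^{-1}$ and $\tilde{\bldXi}^{-1}$ are defined only on the vertex sets of the polytopes, so when the LP has multiple optima one must tacitly choose vertex optima. Because $\phi$ is a bijection between the vertex sets and preserves objective values up to a constant, any vertex optimum chosen on one side has a unique $\phi$-image that is a vertex optimum on the other, and the equivalence still holds.
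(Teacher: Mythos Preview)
Your proposal is correct and follows essentially the same approach as the paper: the paper constructs the bijection $\bldW:\tilde{\cK}_{\cY}(\cB)\to\cK_{\cY}(\cB)$ (the inverse of your $\phi$) by reinserting the $\alpha_i$-coordinate via the block-sum invariant, verifies that it carries $\tilde{\bldXi}(\bldP_{\cY}(\bldx))$ to $\bldXi(\bldP_{\cY}(\bldx))$, and shows the two cost functions differ by the additive constant $\sum_{i\in\cY}\log h_i(\alpha_i)$. Your write-up is in fact a bit more careful than the paper's in spelling out why the block-sum invariant extends from the vertices to all of $\cK_{\cY}(\cB)$ and in handling the possibility of non-unique optima.
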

\begin{proof}
It is easy to show that the simple bijection 
\[ 
\bldW \; : \; \tilde{\cK}_{\cYY}(\cB) \longrightarrow \cK_{\cYY}(\cB) 
\]
defined by
\[
\bldW(\tilde{\bldg}) = \bldg
\]
where
\[
\forall i \in \cY, \alpha \in \cA_i, \quad g_i^{(\alpha)} = 
\left\{ \begin{array}{cc}
\tilde{g}_i^{(\alpha)} & \textrm{ if } \alpha \in \cA_i^{-} \\
1 - \sum_{\beta \in \cA_i^{-}} \tilde{g}_i^{(\beta)} & \textrm{ if } \alpha = \alpha_i \end{array}\right.
\]
which has inverse given by 
\[
\tilde{g}_i^{(\alpha)} = g_i^{(\alpha)} \quad \forall i \in \cY, \alpha \in \cA_i^{-}
\]
has the property that $\tilde{\bldg} = \tilde{\bldXi}(\cP_{\cYY}(\bldx))$ if and only if $\bldg = \bldW(\tilde{\bldg}) = \bldXi(\cP_{\cYY}(\bldx))$. Also observe that if $\bldg = \bldW(\tilde{\bldg})$,
\begin{eqnarray}
\tilde{\boldsymbol{\lambda}} \tilde{\bldg} ^T & = & \sum_{\alpha \in \cA_i^{-}} \left[ \log h_i(\alpha) - \log h_i(\alpha_i) \right] \tilde{g}_i^{(\alpha)} \nonumber \\
& = & \sum_{\alpha \in \cA_i^{-}} \log h_i(\alpha) g_i^{(\alpha)} - \log h_i(\alpha_i) [ 1 - g_i^{(\alpha_i)} ] \nonumber \\
& = & \boldsymbol{\lambda} \bldg ^T- \log h_i(\alpha_i) 
\label{eq:cost_fn_equivalence}
\end{eqnarray}
i.e. the bijection $\bldW$ preserves the cost function up to an additive constant. Taken together, these two facts imply that $\tilde{\bldg}$ maximizes $\tilde{\boldsymbol{\lambda}} \tilde{\bldg} ^T$ over $\tilde{\cK}_{\cYY}(\cB)$ if and only if  $\bldg = \bldW(\tilde{\bldg})$ maximizes $\boldsymbol{\lambda} \bldg ^T$ over $\cK_{\cYY}(\cB)$. This proves the result, and justifies the use of the notation $\bldx_\mathrm{opt}$ in (\ref{eq:xopt_LP2}).
\end{proof}

%-------------------------------------------------------------------------------------- 
\section{Efficient Linear Programming Relaxation and its Properties}
\label{sec:efficient_LP}

To reduce complexity of the LP, we introduce auxiliary variables whose constraints, along with those of the elements of $\tilde{\bldg} \in \mathbb{R}^{N_{\cYY}^{-}}$,
will form the relaxed LP problem. We denote these auxiliary variables by
\begin{equation}
p_{j,\bldbb} \; \mbox{ for each } \;  j\in\cL, \bldb \in \cB_j
\label{eq:auxiliary_variables}
\end{equation}
and we form the following vector
\[
\bldp = \left( \bldp_j \right)_{j \in \cL} \; \mbox{ where } \; \bldp_j = \left( p_{j,\bldbb} \right)_{\bldbb \in \cB_j} \: \forall j \in \cL \; .
\]
For each $i\in\cI\backslash \cY$, let $t_i$ be an arbitrary element of $J_i$. The constraints of the relaxed LP problem are then
\begin{equation}
\forall j \in \cL, \; \forall \bldb \in \cB_j,  \quad  p_{j,\bldbb} \ge 0 \; ,
\label{eq:equation-polytope-4} 
\end{equation} 
\begin{equation}
\forall j \in \cL, \quad \sum_{\bldbb \in \cB_j} p_{j,\bldbb} = 1 \; ,
\label{eq:equation-polytope-5} 
\end{equation}  
\begin{eqnarray}
& \forall i \in \cY, \; \forall j \in \cJ_i \cap \cL, \; \forall \alpha \in \cA_i^{-}, \nonumber \\
& \tilde{g}_i^{(\alpha)} = 
\sum_{\bldbb \in \cB_j, \; b_i=\alpha} p_{j,\bldbb}
\label{eq:equation-polytope-6} 
\end{eqnarray} 
and
\begin{eqnarray}
& \forall i \in \cI\backslash\cY, \; \forall j \in \cJ_i \backslash \{t_i\}, \; \forall \alpha \in \cA_i^{-}, \nonumber \\
& \sum_{\bldbb \in \cB_j, \; b_i=\alpha} p_{j,\bldbb} = 
\sum_{\bldbb \in \cB_{t_i}, \; b_i=\alpha} p_{t_i,\bldbb} \; .
\label{eq:equation-polytope-7} 
\end{eqnarray} 

Constraints~(\ref{eq:equation-polytope-4})-(\ref{eq:equation-polytope-7}) form a polytope which we denote by $\tilde{\cQ}$. The maximization of
the objective function $\tilde{\boldsymbol{\lambda}} \tilde{\bldg} ^T$ over $\tilde{\cQ}$ forms the relaxed LP problem.

Observe that the further constraints
\begin{equation}
\forall j \in \cL, \; \forall \bldb \in \cB_j,  \quad  p_{j,\bldbb} \le 1 \; ,
\label{eq:equation-polytope-1} 
\end{equation}  
\begin{equation}
\forall i \in \cY, \; \forall \alpha\in\cA_i^{-}, \quad 0 \le \tilde{g}_i^{(\alpha)} \le 1   
\label{eq:equation-polytope-2} 
\end{equation} 
and
\begin{equation}
\forall i \in \cY, \quad \sum_{\alpha\in\cA_i^{-}} \tilde{g}_i^{(\alpha)} \le 1 
\label{eq:equation-polytope-3} 
\end{equation} 
follow from the constraints~(\ref{eq:equation-polytope-4})-(\ref{eq:equation-polytope-7}), for any 
$(\tilde{\bldg}, \bldp) \in \tilde{\cQ}$.

The receiver algorithm works as follows. First, we say a point in a polytope is \emph{integral} if and only if all of its coordinates are integers. If the LP solution $(\tilde{\bldg}_\mathrm{out}, \bldp)$ is an integral point in $\tilde{\cQ}$, the output is the configuration $\bldx_\mathrm{out} = \bldP_{\cYY}^{-1} \left( \tilde{\bldXi}^{-1} (\tilde{\bldg}_\mathrm{out}) \right)$ (we shall prove in the next section that this output is indeed in $\cB$). This configuration may be equal to the optimum configuration (we call this `correct reception') or it may not be (we call this `incorrect reception'). Of course, in the communications context, we are usually only interested in a subset of the configuration symbols, namely the information bits. If the LP solution is not integral, the receiver reports a `receiver failure'. Note that in this paper, we say that the receiver makes a \emph{reception error} when the receiver output is not equal to the correct configuration (this could correspond to a `receiver failure', or to an `incorrect reception').

\section{LP Equivalent to the Efficient Relaxation}
\label{sec:efficient_LP_theoretical}

We next define another linear program, and prove that its performance is equivalent to that defined in Section \ref{sec:efficient_LP}. This new program is more computationally complex than that defined previously, but (due to its equivalence and simplicity of description) is more useful for theoretical work. For this we define $N = \sum_{i \in \cI} N_i$ and 
\[
\bar{\bldXi} \; : \; \cA \longrightarrow \{ 0, 1 \}^{N} \subset \mathbb{R}^{N}  
\]
according to
\[
\bar{\bldXi}(\bldx) = ( \bldxi_i(x_i) )_{i \in\cI} \; .
\]
Again, $\bar{\bldXi}$ is injective. For vectors $\bar{\bldg} \in \mathbb{R}^{N}$, we denote $\bar{\bldg} = ( \bldg_i )_{i \in \cI}$ and $\bldg = ( \bldg_i )_{i \in \cY}$, where $\bldg_i = ( g_i^{(\alpha)} )_{\alpha \in \cA_i}$ for each $i \in \cI$.

The new LP optimizes the cost function $\boldsymbol{\lambda} \bldg ^T$ over the polytope $\cQ$ defined with respect to variables $\bar{\bldg}$ and $\bldp$, the constraints (\ref{eq:equation-polytope-4}) and (\ref{eq:equation-polytope-5}), and the following single constraint: 
\begin{equation}
\forall j \in \cL, \; \forall i \in \cI_j, \; \forall \alpha \in \cA_i, \quad g_i^{(\alpha)} = \sum_{\bldbb \in \cB_j, \; b_i=\alpha} p_{j,\bldbb} \; .
\label{eq:equation-polytope-theoretical} 
\end{equation} 
Note that (\ref{eq:equation-polytope-1}), together with the constraints
\begin{equation}
\forall i \in \cI, \; \forall \alpha\in\cA_i, \quad 0 \le g_i^{(\alpha)} \le 1   
\label{eq:equation-polytope-2-new} 
\end{equation} 
and
\begin{equation}
\forall i \in \cI, \quad \sum_{\alpha\in\cA_i} g_i^{(\alpha)} = 1 
\label{eq:sum_g_equals_1} 
\end{equation} 
follow from the constraints~(\ref{eq:equation-polytope-4}), (\ref{eq:equation-polytope-5}) and (\ref{eq:equation-polytope-theoretical}), for any 
$(\bar{\bldg}, \bldp) \in \cQ$.

In this case, the receiver output is equal to the configuration $\bldx_\mathrm{out} = \bar{\bldXi}^{-1} (\bar{\bldg}_\mathrm{out})$ in the case where the LP solution $(\bar{\bldg}_\mathrm{out}, \bldp)$ is an integral point in $\cQ$ (again, this output is in $\cB$), and reports a `receiver failure' if the LP solution is not integral. 

The following theorem ensures the equivalence of the two linear programs, and also assures the \emph{optimum certificate} property, i.e., if the receiver output is a configuration, then it is the optimum configuration.
\begin{theorem}\label{prop:LP_equivalence_2}
The linear program defined by (\ref{eq:equation-polytope-4}), (\ref{eq:equation-polytope-5}) and (\ref{eq:equation-polytope-theoretical}) produces the same output (configuration or receiver failure) as the linear program defined by (\ref{eq:equation-polytope-4})--(\ref{eq:equation-polytope-7}). Also, in the case of configuration output, if the receiver output is a configuration, then it is the optimum configuration, i.e., $\bldx_\mathrm{out} = \bldx_\mathrm{opt}$. 
\end{theorem}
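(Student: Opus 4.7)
The plan is to construct an explicit bijection $\Phi \colon \cQ \to \tilde{\cQ}$ that preserves integrality and preserves the objective function up to an additive constant, and then to handle the optimum certificate property by showing that $\cQ$ is a relaxation of the convex hull of valid configurations.

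I would define $\Phi(\bar{\bldg}, \bldp) = (\tilde{\bldg}, \bldp)$ by retaining $\bldp$ and setting $\tilde{g}_i^{(\alpha)} = g_i^{(\alpha)}$ for every $i \in \cY$ and $\alpha \in \cA_i^{-}$. Constraints (\ref{eq:equation-polytope-4}) and (\ref{eq:equation-polytope-5}) are immediate; (\ref{eq:equation-polytope-6}) is exactly (\ref{eq:equation-polytope-theoretical}) restricted to $i \in \cY$ and $\alpha \in \cA_i^{-}$; and in (\ref{eq:equation-polytope-7}) both sides equal $g_i^{(\alpha)}$ by (\ref{eq:equation-polytope-theoretical}). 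For the inverse map, given $(\tilde{\bldg}, \bldp) \in \tilde{\cQ}$, I would recover $\bar{\bldg}$ by taking $g_i^{(\alpha)} = \tilde{g}_i^{(\alpha)}$ on $i \in \cY$, $\alpha \in \cA_i^{-}$, setting $g_i^{(\alpha_i)} = 1 - \sum_{\beta \in \cA_i^{-}} \tilde{g}_i^{(\beta)}$ on $i \in \cY$, and setting $g_i^{(\alpha)} = \sum_{\bldbb \in \cB_{t_i},\, b_i = \alpha} p_{t_i,\bldbb}$ for $i \in \cI \setminus \cY$.

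The main technical step is verifying that this candidate satisfies (\ref{eq:equation-polytope-theoretical}). For $i \in \cY$ and $\alpha = \alpha_i$, the required identity $\sum_{\bldbb \in \cB_j,\, b_i=\alpha_i} p_{j,\bldbb} = 1 - \sum_{\beta \in \cA_i^{-}} \tilde{g}_i^{(\beta)}$ follows by partitioning $\sum_{\bldbb \in \cB_j} p_{j,\bldbb} = 1$ from (\ref{eq:equation-polytope-5}) according to $b_i$ and applying (\ref{eq:equation-polytope-6}) to the pieces with $b_i \in \cA_i^{-}$. For $i \in \cI \setminus \cY$, (\ref{eq:equation-polytope-theoretical}) must hold for \emph{every} $j \in \cJ_i$, not only $j = t_i$, and this is exactly what the pairwise consistency constraints (\ref{eq:equation-polytope-7}) guarantee for $\alpha \in \cA_i^{-}$; the same partition argument then extends to $\alpha = \alpha_i$. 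That $\Phi$ is a bijection and preserves integrality is immediate, since the new coordinates are integer combinations of integer values; and the two objective functions differ by the constant $\sum_{i \in \cY} \log h_i(\alpha_i)$ by exactly the computation (\ref{eq:cost_fn_equivalence}). Consequently the two LPs have integral optima simultaneously (or both report failure), and when integral the matching $\cY$-coordinates, together with injectivity of $\bldP_{\cYY}$ on $\cB$, yield the same $\bldx_\mathrm{out}$.

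For the optimum certificate property, integrality of $(\bar{\bldg}_\mathrm{out}, \bldp)$ forces every $p_{j,\bldbb} \in \{0,1\}$; together with (\ref{eq:equation-polytope-5}) this means that for each $j \in \cL$ a unique $\bldb^{\star j} \in \cB_j$ has $p_{j,\bldb^{\star j}} = 1$. Then (\ref{eq:equation-polytope-theoretical}) gives $g_i^{(\alpha)} = [\bldb^{\star j}_i = \alpha]$ for every $j \in \cJ_i$, so $(\bldx_\mathrm{out})_i$ is independent of which $j$ one consults, forcing $(\bldx_\mathrm{out})_j = \bldb^{\star j} \in \cB_j$ and hence $\bldx_\mathrm{out} \in \cB$. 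Conversely, every $\bldx \in \cB$ embeds in $\cQ$ as an integral point via $g_i^{(\alpha)} = [x_i = \alpha]$ and $p_{j,\bldbb} = [\bldbb = \bldx_j]$, so the LP2 maximum upper bounds the original objective over $\cB$. Since $\bldx_\mathrm{out}$ attains this maximum and is itself a valid configuration, $\bldx_\mathrm{out} = \bldx_\mathrm{opt}$.
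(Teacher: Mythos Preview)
Your proof is correct and follows essentially the same route as the paper's own argument: you construct the same bijection between $\tilde{\cQ}$ and $\cQ$ (the paper calls it $\bldV$ and orients it from $\tilde{\cQ}$ to $\cQ$, you call it $\Phi$ and orient it the other way), observe that it preserves the cost function up to the additive constant computed in (\ref{eq:cost_fn_equivalence}), and then handle the optimum certificate by showing every $\bldx\in\cB$ embeds integrally while every integral point of $\cQ$ comes from some $\bldx\in\cB$. Your explicit partition argument for the cases $\alpha=\alpha_i$ and $i\in\cI\setminus\cY$ fills in details that the paper leaves as ``straightforward to show,'' but the strategy is the same.
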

\begin{proof}
It is straightforward to show that the mapping
\begin{eqnarray*}
\bldV \; : \; \tilde{\cQ} & \longrightarrow & \cQ \\
(\tilde{\bldg},\bldp) & \mapsto  & (\bar{\bldg},\bldp)
\end{eqnarray*}
defined by
\[
g_i^{(\alpha)} = 
\left\{ \begin{array}{ccc}
\tilde{g}_i^{(\alpha)} & \textrm{ if } i \in \cY, \alpha \in \cA_i^{-} \\
1 - \sum_{\beta \in \cA_i^{-}} \tilde{g}_i^{(\beta)} & \textrm{ if }  i \in \cY, \alpha = \alpha_i \\
\sum_{\bldbb \in \cB_{t_i}, \; b_i=\alpha} p_{{t_i},\bldbb}  & \textrm{ if } i \in \cI \backslash \cY \end{array}\right.
\]
and with inverse 
\[
\tilde{g}_i^{(\alpha)} = g_i^{(\alpha)} \quad \forall i \in \cY, \alpha \in \cA_i^{-}
\]
is a bijection from one polytope to the other (i.e. $\tilde{\bldg} \in \mathbb{R}^{N_{\cYY}^{-}}$ satisfies (\ref{eq:equation-polytope-4})--(\ref{eq:equation-polytope-7}) for some vector $\bldp$ if and only if $\bar{\bldg} \in \mathbb{R}^{N}$ with $(\bar{\bldg}, \bldp) = \bldV(\tilde{\bldg}, \bldp)$ satisfies (\ref{eq:equation-polytope-4}), (\ref{eq:equation-polytope-5}) and (\ref{eq:equation-polytope-theoretical}) for the same vector $\bldp$). Also, since $(\bar{\bldg}, \bldp) = \bldV(\tilde{\bldg}, \bldp)$ implies $\bldg = \bldW(\tilde{\bldg})$, (\ref{eq:cost_fn_equivalence}) implies that the bijection $\bldV$ preserves the cost function up to an additive constant. 

Next, we prove that for every configuration $\bldx \in \cB$, there exists $\bldp$ such that $(\bar{\bldXi} \left(\bldx \right), \bldp) \in \cQ$. Let $\bldx \in \cB$, and define
\[
\forall j \in \cL, \bldb \in \cB_j, \quad p_{j,\bldbb} = 
\left\{ \begin{array}{cc}
1 & \textrm{ if } \bldb = (x_i)_{i \in \cI_j} \\
0 & \textrm{ otherwise. } \end{array}\right.
\]
Letting $\tilde{\bldg} = \tilde{\bldXi}(\cP_{\cYY}(\bldx))$ and $\bar{\bldg} = \bar{\bldXi}(\bldx)$, it is easy to check that $(\tilde{\bldg}, \bldp) \in \tilde{\cQ}$ and $(\bar{\bldg}, \bldp) \in \cQ$ (and that in fact $(\bar{\bldg},\bldp) = \bldV(\tilde{\bldg}, \bldp)$). This property ensures that every valid configuration $\bldx \in \cB$ has a ``representative" in the polytope, and thus is a candidate for being output by the receiver.

Next, let $(\tilde{\bldg}, \bldp) \in \tilde{\cQ}$ and let $\bar{\bldg} \in \mathbb{R}^{N}$ be such that $(\bar{\bldg}, \bldp) = \bldV(\tilde{\bldg}, \bldp) \in \cQ$. Suppose that all of the coordinates of $\bldp$ are integers. Then, by (\ref{eq:equation-polytope-4}) and (\ref{eq:equation-polytope-5}), for any $j \in \cL$ we must have 
\[
\forall \bldb \in \cB_j, \quad p_{j,\bldbb} = 
\left\{ \begin{array}{cc}
1 & \textrm{ if } \bldb = \bldb^{(j)} \\
0 & \textrm{ otherwise } \end{array}\right.
\]
for some $\bldb^{(j)} \in \cB_j$.

Now we note that for any $i \in \cI$, $j,k \in \cJ_i \cap \cL$, if $b^{(j)}_i = \alpha$ then (using (\ref{eq:equation-polytope-theoretical}))
\begin{equation}
g_i^{(\alpha)} = \sum_{\bldbb \in \cB_j, \; b_i=\alpha} p_{j,\bldbb} = 1 = \sum_{\bldbb \in \cB_k, \; b_i=\alpha} p_{k,\bldbb}
\label{eq:starstar}
\end{equation}
and thus $b^{(j)}_i = \alpha$. Therefore, there exists $\bldx \in \cA$ such that 
\[
(x_i)_{i \in \cI_j} = \bldb^{(j)} \quad \forall j \in \cL \; .
\]
Therefore, $\bldx$ is a valid configuration ($\bldx \in \cB$). Also we may conclude from (\ref{eq:starstar}) that
\[
g_i^{(\alpha)} = 
\left\{ \begin{array}{cc}
1 & \textrm{ if } x_i = \alpha \\
0 & \textrm{ otherwise } \end{array}\right.
\] 
and therefore $\bar{\bldg} = \bar{\bldXi} \left( \bldx \right)$. Also, from the definition of the mapping $\bldV$, we have $\tilde{\bldg} = \tilde{\bldXi} \left( \bldP_{\cYY}\left(\bldx\right) \right)$.

Summarizing these results, we conclude that $(\bar{\bldg}_\mathrm{opt}, \bldp) \in \cQ$ optimizes the cost function $\boldsymbol{\lambda} \bldg ^T$ over $\cQ$ and is integral if and only if $(\tilde{\bldg}_\mathrm{opt}, \bldp) = \bldV^{-1}(\bar{\bldg}_\mathrm{opt}, \bldp) \in \tilde{\cQ}$ optimizes the cost function $\tilde{\boldsymbol{\lambda}} \tilde{\bldg} ^T$ over $\tilde{\cQ}$ and is integral, where $\bar{\bldXi}^{-1}(\bar{\bldg}) = \bldx \in \cB$ and $\bldx_{\cYY} = \tilde{\bldXi}^{-1}(\tilde{\bldg}) = \cP_{\cYY}(\bldx)$.
\end{proof}
\medskip
Thus both LP receivers output either a receiver failure, or the optimum configuration, and have the same performance. The LP receiver of Section \ref{sec:efficient_LP} has lower complexity and is suitable for implementation (e.g. for the program of Section \ref{sec:equalization_decoding}); however, for theoretical work the LP of Section \ref{sec:efficient_LP_theoretical} is more suitable (we shall use this polytope throughout Section \ref{sec:PCFs}). 
%-------------------------------------------------------------------------------------- 
\section{Pseudoconfigurations} 
\label{sec:PCFs}

In this section, we prove a connection between the failure of the LP and SP receivers based on \emph{pseudoconfiguration} concepts.
 
\subsection{Linear Programming Pseudoconfigurations}  
\begin{definition}
A \emph{linear-programming pseudoconfiguration} (LP pseudoconfiguration) is a point $(\bar{\bldg}, \bldp)$ in the polytope $\cQ$ with rational coordinates.
\end{definition}

Note that, since the coefficients of the LP are rational, the LP output must be the LP pseudoconfiguration which minimizes the cost function.

\subsection{Factor Graph Covers and Graph-Cover Pseudoconfigurations}

We next define what is meant by a finite cover of a factor graph.  
\begin{definition}
Let $M$ be a positive integer, and let $\cS = \{ 1, 2, \cdots, M \}$. Let $\cG$ be the factor graph corresponding to the global function $u$ and its factorization given in (\ref{eq:factorization_of_global_function}). A \emph{cover configuration} of degree $M$ is a vector 
$\bldx^{(M)} = ( \bldx_i^{(M)} )_{i \in \cI}$ where $\bldx_i^{(M)} = (x_{i,l})_{l \in S} \in \cA_i^M$ for each $i \in \cI$.
Define $u^{(M)}$ as the following function of the cover configuration $\bldx^{(M)}$ of degree $M$:   
\begin{equation}
u^{(M)}\left(\bldx^{(M)}\right)=\prod_{l\in \cS} \prod_{j\in \cJ}f_{j}\left(\bldx_{j,l}\right)
\label{eq:cover_graph_factorization}
\end{equation}
where, for each $j\in\cJ$, $i\in\cI_j$, $\Pi_{j,i}$ is a permutation on the set $S$, and for each $j\in\cJ$, $l\in\cS$,
\[
\bldx_{j,l} = ( x_{i,\Pi_{j,i}(l)} )_{i \in\cI_j} \; .
\]
A \emph{cover} of the factor graph $\cG$, of degree $M$, is a factor graph for the global function $u^{(M)}$ and its factorisation (\ref{eq:cover_graph_factorization}). In order to distinguish between different factor node labels, we write (\ref{eq:cover_graph_factorization}) as
\[
u^{(M)}\left(\bldx^{(M)}\right)=\prod_{l\in \cS} \prod_{j\in \cJ}f_{j,l}\left(\bldx_{j,l}\right)
\]
where $f_{j,l} = f_{j}$ for each $j\in\cJ$, $l\in\cS$.
\end{definition}
\medskip
It may be seen that a cover graph of degree $M$ is a graph whose vertex set consists of $M$ copies of $x_i$ (labelled $x_{i,l}$) and $M$ copies of $f_j$ (labelled $f_{j,l}$), such that for each $j\in\cJ$, $i\in\cI_j$, the $M$ copies of $x_i$ and the $M$ copies of $f_j$ are connected in a one-to-one fashion determined by the permutations $\{ \Pi_{j,i} \}$.

We define the \emph{cover behaviour} $\cB_M$ as follows. The cover configuration $\bldx^{(M)}$ lies in $\cB_M$ if and only if $\bldx_{j,l} \in \cB_j$ for each $j\in\cJ$, $l\in\cS$.

For any $M\ge 1$, a \emph{graph-cover pseudoconfiguration} is a valid cover configuration (i.e. one which lies in the behaviour $\cB_M$). 

For any graph-cover pseudoconfiguration, we also define the \emph{graph-cover pseudoconfiguration vector} $\bar{\bldh} \in \mathbb{R}^{N}$ according to 
\[
\bar{\bldh} = ( \bldh_i )_{i \in \cI} \quad \mbox{where} \quad \bldh_i = ( h_i^{(\alpha)} )_{\alpha \in \cA_i} \quad \forall i\in\cI
\]
and
\[
h_i^{(\alpha)} = \left| \{ l \in \cS \; : \; x_{i,l} = \alpha \} \right| 
\]
for each $i\in\cI$, $\alpha\in\cA_i$. Finally, we define the \emph{normalized graph-cover pseudoconfiguration vector} $\bar{\bldg} \in \mathbb{R}^{N}$ by $\bar{\bldg} = \bar{\bldh} / M$.

\subsection{Equivalence between Pseudoconfiguration Concepts}

In this section, we show the equivalence between the set of LP pseudoconfigurations and the set of
graph-cover pseudoconfigurations. The result is summarized in the following theorem. 
\begin{theorem}\label{thm:PCW_equivalence}
There exists an LP pseudoconfiguration $(\bar{\bldg}, \bldp)$ if and only if there 
exists a graph-cover pseudoconfiguration with normalized pseudoconfiguration vector $\bar{\bldg}$. 
\end{theorem}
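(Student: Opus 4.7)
The plan is to prove both directions by direct construction, mirroring the well-known Koetter--Vontobel equivalence for LP/graph-cover pseudocodewords. The polytope's marginalization constraint (\ref{eq:equation-polytope-theoretical}) is exactly what makes the constructions work out; its role is to guarantee that various counts match up, so that assignments on the cover graph are consistent with the global cover configuration.

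\textbf{$(\Leftarrow)$ Graph-cover $\Rightarrow$ LP.} Given a graph-cover pseudoconfiguration $\bldx^{(M)} \in \cB_M$ of degree $M$ with normalized vector $\bar{\bldg} = \bar{\bldh}/M$, I would define
\[
p_{j,\bldbb} = \frac{1}{M} \bigl| \{ l \in \cS \; : \; \bldx_{j,l} = \bldb \} \bigr| \quad \forall j \in \cL, \bldb \in \cB_j.
\]
These are rational and nonnegative, and $\sum_{\bldbb \in \cB_j} p_{j,\bldbb} = 1$ because each $l \in \cS$ gives exactly one $\bldx_{j,l}$ (which lies in $\cB_j$ since $\bldx^{(M)} \in \cB_M$). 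For the marginalization constraint, I would use that $\Pi_{j,i}$ is a permutation: for $j \in \cL$, $i \in \cI_j$, $\alpha \in \cA_i$,
\[
\sum_{\bldbb\in\cB_j,\,b_i=\alpha} p_{j,\bldbb} = \frac{1}{M}\bigl| \{ l : x_{i,\Pi_{j,i}(l)} = \alpha \} \bigr| = \frac{h_i^{(\alpha)}}{M} = g_i^{(\alpha)}.
\]
This establishes $(\bar{\bldg}, \bldp) \in \cQ$ with rational coordinates.

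\textbf{$(\Rightarrow)$ LP $\Rightarrow$ Graph-cover.} Given a rational LP pseudoconfiguration $(\bar{\bldg}, \bldp) \in \cQ$, let $M$ be a common denominator of all coordinates, so that $m_{j,\bldbb} := M p_{j,\bldbb}$ and $m_i^{(\alpha)} := M g_i^{(\alpha)}$ are nonnegative integers. I would construct a cover of degree $M$ as follows. First, for each $i \in \cI$, partition $\cS = \{1,\dots,M\}$ into disjoint sets $\{S_i^{(\alpha)}\}_{\alpha \in \cA_i}$ with $|S_i^{(\alpha)}| = m_i^{(\alpha)}$ (possible since $\sum_\alpha m_i^{(\alpha)} = M$ by (\ref{eq:sum_g_equals_1})), and define $x_{i,l'} = \alpha$ for $l' \in S_i^{(\alpha)}$. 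Next, for each $j \in \cL$, pick any sequence $(\bldb^{(j,1)}, \ldots, \bldb^{(j,M)})$ of elements of $\cB_j$ in which each $\bldb \in \cB_j$ appears exactly $m_{j,\bldbb}$ times; this is the intended value of $\bldx_{j,l}$.

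To complete the cover, I need to define the permutations $\Pi_{j,i}$ so that $x_{i,\Pi_{j,i}(l)} = b^{(j,l)}_i$ for all $l$. For each $j \in \cL$, $i \in \cI_j$, and $\alpha \in \cA_i$, consider the set $T_{j,i}^{(\alpha)} = \{ l \in \cS : b^{(j,l)}_i = \alpha \}$; by construction $|T_{j,i}^{(\alpha)}| = \sum_{\bldbb : b_i = \alpha} m_{j,\bldbb}$, which by the polytope constraint (\ref{eq:equation-polytope-theoretical}) equals $M g_i^{(\alpha)} = |S_i^{(\alpha)}|$. Hence I can pick an arbitrary bijection $T_{j,i}^{(\alpha)} \to S_i^{(\alpha)}$ for each $\alpha$ and paste these into the permutation $\Pi_{j,i}$. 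For pendant $j \in \cJ \setminus \cL$, I take $\Pi_{j,i}$ to be the identity (no behaviour constraint applies). The resulting $\bldx^{(M)}$ satisfies $\bldx_{j,l} = \bldb^{(j,l)} \in \cB_j$ for every $j \in \cL$, $l \in \cS$, so it is a graph-cover pseudoconfiguration. A direct count confirms its normalized vector is $\bar{\bldg}$.

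The only nontrivial point --- and the heart of the argument --- is the equality $|T_{j,i}^{(\alpha)}| = |S_i^{(\alpha)}|$ used in the bijection step of the reverse direction. This is precisely the marginalization constraint (\ref{eq:equation-polytope-theoretical}) written on the scaled (integer) level, so the polytope has been rigged exactly to make the construction go through. Everything else is a careful bookkeeping of counts over permutations.
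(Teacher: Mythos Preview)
Your proposal is correct and follows essentially the same approach as the paper's own proof: in the graph-cover $\Rightarrow$ LP direction you define $p_{j,\bldbb}$ by counting edge labels and verify the constraints, and in the LP $\Rightarrow$ graph-cover direction you clear denominators, assign labels to the $M$ copies of each variable and factor node according to the integer counts, and then build the permutations by matching sets whose cardinalities agree precisely because of (\ref{eq:equation-polytope-theoretical}). The only cosmetic differences are that the paper takes $M$ as the lowest common denominator of just the $p_{j,\bldbb}$ (integrality of the $Mg_i^{(\alpha)}$ then follows from (\ref{eq:equation-polytope-theoretical})) and phrases the reverse construction in terms of connecting labelled vertex sets rather than explicitly building the permutations $\Pi_{j,i}$, but these amount to the same construction.
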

\begin{proof}
Suppose $\bldx^{(M)}$ is a graph-cover pseudoconfiguration for some cover of degree $M$ of the factor graph, and $\bar{\bldg}$ is its normalized graph-cover pseudoconfiguration vector. Then, 
\[
g_i^{(\alpha)} = \frac{1}{M} \left| \{ l \in \cS \; : \; x_{i,l} = \alpha \} \right| 
\]
for all $i \in \cI$, $\alpha \in \cA_i$. Next define $\bldp$ according to 
\[
p_{j,\bldbb} = \frac{1}{M} \left| \{ l \in \cS \; : \; \bldx_{j,l} = \bldb \} \right| 
\]
for all $j \in \cL$, $\bldb \in \cB_j$. Then it is easily seen that (\ref{eq:equation-polytope-4}), (\ref{eq:equation-polytope-5}) and (\ref{eq:equation-polytope-theoretical}) are satisfied, and so $(\bar{\bldg},\bldp) \in \cQ$.

To prove the other direction, suppose $(\bar{\bldg},\bldp) \in \cQ$. Denote by $M$ the lowest common denominator of the (rational) variables $p_{j,\bldbb}$ for $j\in\cL$, $\bldb \in\cB_j$. Define $z_{j,\bldbb} = M p_{j,\bldbb}$ for $j\in\cL$, $\bldb \in\cB_j$; these must all be nonnegative integers. Also define $h_i^{(\alpha)} = M g_i^{(\alpha)}$ for all $i \in \cI$, $\alpha \in \cA_i$; these must all be nonnegative integers by (\ref{eq:equation-polytope-theoretical}).   

We now construct a cover graph of degree $M$ as follows. Begin with $M$ copies of vertex $x_i$ (labelled $x_{i,l}$) and $M$ copies of vertex $f_j$ (labelled $f_{j,l}$), for $i \in \cI$, $j \in \cJ$. Then proceed as follows:

\begin{itemize}
\item
Label $h_i^{(\alpha)}$ copies of $x_i$ with the value $\alpha$, for each $i \in \cI$, $\alpha \in \cA_i$. By (\ref{eq:sum_g_equals_1}), all copies of $x_i$ are labelled. 

\item
Label $z_{j,\bldbb}$ copies of $f_j$ with the value $\bldb$, for every $j \in \cL$, $\bldb \in \cB_j$. By (\ref{eq:equation-polytope-5}), all copies of $f_j$ are labelled.

\item
Next, let $T_i^{(\alpha)}$ denote the set of copies of $x_i$ labelled with the value $\alpha$, for $i\in\cI$, $\alpha\in\cA_i$. Also, for all $j\in\cL$, $i\in\cI_j$, $\alpha\in\cA_i$, let $R_{i,j}^{(\alpha)}$ denote the set of copies of $f_j$ whose label $\bldb$ satisfies $b_i=\alpha$. The vertices in $T_i^{(\alpha)}$ and the vertices in $R_{i,j}^{(\alpha)}$ are then connected by edges in an arbitrary one-to-one fashion, for every $j\in\cL$, $i\in\cI_j$, $\alpha\in\cA_i$.

Numerically, this connection is always possible because
\begin{equation*}
| T_i^{(\alpha)} | = h_i^{(\alpha)} = \sum_{\bldbb \in \code_j, \; b_i=\alpha} z_{j,\bldbb} = | R_{i,j}^{(\alpha)} | 
\end{equation*}
for every $j\in\cL$, $i\in\cI_j$, $\alpha \in \cA_i$. Here we have used~(\ref{eq:equation-polytope-theoretical}). Finally, for each $i \in \cY$ and $j \in \cJ_i$ satisfying $d(f_j) = 1$, the $M$ copies of $x_i$ are connected to the $M$ copies of $f_j$ in an arbitrary one-to-one fashion.
It is easy to check that the resulting graph is a cover graph for the original factorization. Therefore, this vertex labelling yields a graph-cover pseudoconfiguration.  

\end{itemize}
\end{proof}

%-----------------------------------------------------------------------
\section{Example Application: LP-Based Joint Equalization and Decoding} 
\label{sec:equalization_decoding}

In this section we consider an example application where we use the above framework to design an LP receiver for a system using binary coding and binary phase-shift keying (BPSK) modulation over a frequency selective channel (with memory $L$) with additive white Gaussian noise (AWGN). Information-bearing data are encoded to form codewords of the binary code 
\[
\code = \{ \bldc \in GF(2)^n \; : \; \bldc \cH^T = 0 \}  
\]
where $\cH$ is the code's $m \times n$ \emph{parity-check matrix} over $GF(2)$. Denote the set of code bit indices and parity-check indices by $\cU = \{ 1,2,\cdots, n\}$ and $\cV = \{ 1,2,\cdots, m\}$ respectively. We factor the indicator function for the code into factors corresponding to each local parity check: for $j \in \cV$, define the single-parity-check code over $GF(2)$ by 
\[
\code_j = \{ (c_i)_{i \in \cU_j} \; : \; \sum_{i \in \cU_j} c_i = 0 \}
\]
where $\cU_j \subseteq \cU$ is the support of the $j$-th row of $\cH$ for each $j \in \cV$, and the summation is over $GF(2)$. Thus $\bldc \in \code$ if and only if $(c_i)_{i \in \cU_j} \in \code_j$ for each $j \in \cV$. The BPSK modulation mapping $\cM$, which maps from $GF(2)$ to $\mathbb{R}$, is given by $\cM(0)=1$ and $\cM(1)=-1$. The channel has $L+1$ taps $\{ h_0, h_1, \cdots, h_L\}$ and the received signal for the communication model may be written as
\[
r_i = \sum_{t=0}^{L} h_t \cM(c_{i-t}) + n_i
\]
where $n_i$ is a zero-mean complex Gaussian random variable with variance $\sigma^2$. 
%A diagram of the transmitter-channel model is given in figure X. This represents an equivalent channel model; in reality, the modulator comes before the channel delay line. 

We adopt a state-space (trellis) representation for the channel, with state space $\cS = GF(2)^L$; also let $\cS^{-} = \cS \backslash \{ \zeros \}$. The local behaviour (or trellis edge set) for the state-space model, denoted $\cD$, is assumed to be time-invariant, although extension to the case of time-variant channel is straightforward. For simplicity we assume that the initial and final states of the channel are not known at the receiver. For $\bldd \in \cD$, let $\ip(\bldd)$, $\op(\bldd)$, $s^{S}(\bldd)$ and $s^{E}(\bldd)$ denote the channel input, output, start state and end state respectively. Thus if we set $\cD = GF(2)^{L+1}$ and adopt the notation $\bldd = (d_0 \; d_1 \; \cdots \; d_L) \in \cD$, we may have $\ip(\bldd) = d_0$, $s^S(\bldd) = (d_1 \; d_2 \; \cdots \; d_L)$, $s^E(\bldd) = (d_0 \; d_1 \; \cdots \; d_{L-1})$, and $\op(\bldd) = \sum_{t=0}^{L} h_t \cM(d_t)$. 
Also let $\cD^{-} = \cD \backslash \{ \zeros \}$.

%The code is of length $n=10$, and consist of $m=5$ parity checks. Here we write $\cI = \{ 1, 2, \cdots, n \}$ and $\cJ = \{ 1, 2, \cdots, m \}$. The factor graph is shown in figure ; indicated on the figure are variables and their corresponding LP variables.

%We provide the LP listing in general case, but for simulation results we choose a short binary code with many four-cycles. While impractical, this scheme's performance is poor enough to allow us to compare performance in the waterfall and error floor regions with SP decoder, and to investigate the effect of the pseudoconfiguration phenomenon on system performance in both LP and SP decoding.  

%The structure of the LP may be read off the factor graph, and is as follows. The variables of the LP are (indicator functions for the code bits, channel outputs, parity check words and channel states respectively)
The structure of the LP may be written as follows (we use the efficient LP defined in Section \ref{sec:efficient_LP}). The variables of the LP are \footnote{On a correct reception, the LP variables $\{ w_{j,\bldbb} \}$ and $\{ q_{i,\blddd} \}$ serve as indicator functions for the single-parity-check codewords and the channel outputs respectively.}
\[
\tilde{g}_{i}^{(\blddd)} \; \mbox{ for each } \;  i\in\cU, \bldd \in \cD^{-} \; ,
\]
\[
w_{j,\bldbb} \; \mbox{ for each } \;  j\in\cV, \bldb \in \code_j \; ,
\]
\[
q_{i,\blddd} \; \mbox{ for each } \;  i\in\cU, \bldd \in \cD \; ,
\]
and the LP constraints are (here $\cU^{-} = \cU\backslash\{n\}$)
\begin{equation}
\forall j \in \cV, \; \forall \bldb \in \code_j,  \quad  w_{j,\bldbb} \ge 0 
\end{equation} 
and
\begin{equation}
\forall i \in \cU, \; \forall \bldd \in \cD,  \quad  q_{i,\blddd} \ge 0
\end{equation} 
which follow from (\ref{eq:equation-polytope-4}), 
\begin{equation}
\forall j \in \cV, \quad \sum_{\bldbb \in \code_j} w_{j,\bldbb} = 1
\end{equation}
and
\begin{equation}
\forall i \in \cU, \quad \sum_{\blddd \in \cD} q_{i,\blddd} = 1
\end{equation}
from (\ref{eq:equation-polytope-5}), 
\begin{equation}
\forall i \in \cU, \bldd \in \cD^{-}, \quad \tilde{g}_{i}^{(\blddd)} =  q_{i,\blddd}
\label{eq:sumg_equals_sumq}
\end{equation}
from (\ref{eq:equation-polytope-6}), and
\begin{eqnarray}
& \forall i \in \cU, \forall j \in \cU_j, \nonumber \\
& \sum_{\blddd \in \cD, \; \ip(\blddd)=0} q_{i,\blddd} = \sum_{\bldbb \in \code_j, \; \bldbb_i=0} w_{j,\bldbb}
\end{eqnarray}
and
\begin{eqnarray}
& \forall i \in \cU^{-}, \forall s \in \cS^{-}, \nonumber \\
& \sum_{\blddd \in \cD, \; s^{E}(\blddd)=s} q_{i,\blddd} = \sum_{\blddd \in \cD, \; s^{S}(\blddd)=s} q_{i+1,\blddd}
\end{eqnarray}
from (\ref{eq:equation-polytope-7}).

Also, the LP cost function is
\begin{equation}
\sum_{i \in \cU} \sum_{\blddd \in\cD^{-}} \tilde{\lambda}_i^{(\blddd)} \tilde{g}_i^{(\blddd)}
\label{eq:cost_fn_turbo_eq}
\end{equation}
where we have, for $i\in\cU$, $\bldd \in\cD^{-}$, 
%\begin{multline}
%\tilde{\lambda}_i^{(\blddd)} = \log h_i(\bldd) = -\frac{1}{\sigma^2} \Big( \left| r_i - \sum_{t=0}^{L} h_t \cM(d_t) \right|^2 - \\
%\left| r_i - \sum_{t=0}^{L} h_t \right|^2 \Big)
%\end{multline} 
\begin{eqnarray}
\tilde{\lambda}_i^{(\blddd)} & = & \log \left( \frac{p(r_i | \bldd)}{p(r_i | \zeros)} \right) \nonumber \\ 
& = & \frac{1}{\sigma^2} \left( \left| r_i - \op(\zeros) \right|^2 - \left| r_i - \op(\bldd) \right|^2 \right) \; .
\end{eqnarray}
%Simulation results, in terms of bit error rate (BER) and frame error rate (FER), for the system are shown in figure Z.

Note that in this application, the variables $\tilde{g}_i^{(\blddd)}$, together with the constraint (\ref{eq:sumg_equals_sumq}), would be removed due to their redundancy, and the variables $q_{i,\blddd}$ used directly in the cost function (\ref{eq:cost_fn_turbo_eq}). The resulting LP is capable of joint equalization and decoding, and has strong links (via Theorems \ref{prop:LP_equivalence_2} and \ref{thm:PCW_equivalence}) to the corresponding ``turbo equalizer'' based on application of the sum-product algorithm to the same factorization of the global function.

\section*{Acknowledgment}
The author would like to acknowledge the support of the Institute of Advanced Studies, University of Bologna (ISA-ESRF Fellowship).

%-----------------------------------------------------------------------

\end{document}